\newtheorem{theorem}{Theorem}
\newtheorem{definition}[theorem]{Definition}
\newtheorem{lemma}[theorem]{Lemma}
\DeclareMathOperator{\Tr}{Tr}
\begin{document}

\title{Mutually Unbiased Unitary Bases in $\mathcal{H}_d$ for $d$-dimensional subspace of $M (d,\mathbb{C})$}

\author{Rinie N. M. Nasir}
\affiliation{Faculty of Science, International Islamic University Malaysia (IIUM),
Jalan Sultan Ahmad Shah, Bandar Indera Mahkota, 25200 Kuantan, Pahang, Malaysia}
\author{Jesni Shamsul Shaari}
\affiliation{Faculty of Science, International Islamic University Malaysia (IIUM),
Jalan Sultan Ahmad Shah, Bandar Indera Mahkota, 25200 Kuantan, Pahang, Malaysia}
\affiliation{ Institute of Mathematical Research (INSPEM), University Putra Malaysia, 43400 UPM Serdang, Selangor, Malaysia.}
\author{Stefano Mancini}
\affiliation{School of Science \& Technology, University of Camerino, I-62032 Camerino, Italy}
\affiliation{ INFN Sezione di Perugia, I-06123 Perugia, Italy}

\date{\today}
\begin{abstract}
Akin to the idea of complete sets of Mutually Unbiased Bases for prime dimensional Hilbert spaces, $\mathcal{H}_d$,  we study its analogue for a $d$ dimensional subspace of $M (d,\mathbb{C})$, i.e. Mutually Unbiased Unitary Bases (MUUBs) comprising of unitary operators. We note an obvious isomorphism between the vector spaces and beyond that, we define a relevant monoid structure for $\mathcal{H}_d$ isomorphic to one for the subspace of $M (d,\mathbb{C})$. This provides us not only with the maximal number of such MUUBs, but also a recipe for its construction.
{\keywords{Foundations of quantum mechanics \and Mutually Unbiased Bases (MUBs) \and Mutually Unbiased Unitary Bases (MUUBs)}}
\end{abstract}

\maketitle



\section{Introduction}
Notwithstanding the rich details of the field of quantum information theory, it can be described in a nutshell as dealing with the notion of retrieving or manipulating information encoded via quantum mechanical properties of a system. In the context of quantum physics, these are related to the fundamental issues of state estimation and unitary transformations. 

Quantum state estimation is probabilistic in nature as measurements of a system in a certain basis different from which it was prepared would generally result in some random outcome. The concept of mutually unbiased bases (MUB), introduced in 1960 by Schwinger \cite{Schwinger} can be seen as describing the extreme scenario where the transition from any state in one basis to any state in the other basis is equiprobable. Various constructions of MUB has been reported in \cite{Alltop,Ivanovic,Wootters,Band}.  We refer the reader to \cite{Durt} for an excellent review on the matter. MUB has even found application in the field of quantum key distribution (QKD) where quantum states from different MUBs act as information carriers \cite{QKD}.

Motivated by the study of MUB, ref.\cite{Jesni} considered the notion of Mutually Unbiased Unitary Bases (MUUB) based on considering the idea of equiprobable guesses of unitary transformations. It is noteworthy that prior to this, ref.\cite{Scott} introduced the notion of MUUB to construct a unitary-2 design related to process tomography of unital quantum channels, albeit a slightly less general definition in comparison.

It was argued in \cite{Jesni}, that, for $d$ being a prime number, MUUBs can only exist for the $d$ and $d^2$ dimensional subspaces of $M (d,\mathbb{C})$. It is also known that the maximal number of MUUBs that can exist for $d^2$ dimensional space is $d^2-1$ though it remains unknown if such a number can be constructed \cite{Scott}. As a matter of fact the question of how many can even exist for the smaller $d$ dimensional subspace was unanswered for any prime $d$, let alone a proper recipe for construction. In this work, we provide definitive answers to these questions. We should like to point out that in ref.\cite{Jesni}, the construction of MUUBs for the $d$ dimensional subspace was done for only $d=2,3$ where $d=2$ was simple enough to be solved analytically based on the definition of MUUBs, as it dealt with only $2\times 2$ matrices. The case for $d=3$ on the other hand was based on some numerical search.

We begin in Sec. II with the definitions of MUB and MUUB and the relevant focus of the work; i.e. prime dimensional Hilbert spaces and  the $d$ dimensional subspace of $M (d,\mathbb{C})$. Given the relevant isomorphism between finite dimensional vector spaces with the same dimensionality, we show in Sec. III  that a unique linear transformation maps MUBs for $\mathcal{H}_d$ to bases for some subspace of $M (d,\mathbb{C})$ such that, the Hilbert-Schmidt norm for an element in one such bases with one from another is proportional to the inner product of states from different MUBs. In Sec. IV, we introduce a binary operation on $\mathcal{H}_d$ to form a monoid which is then shown to be isomorphic to a monoid defined based on the relevant subspace of $M (d,\mathbb{C})$. This is then used to provide the proof for the maximal number of MUUBs as being less than 1 compared to the MUBs for $\mathcal{H}_d$. We note that the proof is constructive in the sense that it also provides a recipe for the construction of such MUUBs. We then use this to construct MUBs consisting of maximally entangled states (MES) for a subspace of $\mathcal{H}_d\otimes \mathcal{H}_d$ before ending the work with a concluding section.

\section{MUB and MUUB}

\noindent Let us begin with definitions of MUBs  \cite{Band} and MUUBs  \cite{Jesni}.\\
\begin{definition}
Two distinct orthonormal bases for a $d$-dimensional Hilbert space, ${ B^{(0)}}=\{ \vert \varphi_0 \rangle,...,\vert \varphi_{d-1} \rangle \}$ and ${ B^{(1)}}=\{ \vert \phi_0 \rangle,...,\vert \phi_{d-1} \rangle \}$ are said to be mutually unbiased bases (MUB) provided that $\vert \langle \varphi_i \vert \phi_j \rangle\vert=1/\sqrt{d}$, for every $i, j=0,...,{d-1}$.
\end{definition}

\noindent {  Superscripts in parentheses represent a basis's index (a notation we shall use throughout)}. We shall focus only on MUBs for $\mathcal{H}_d$ with $d$ being a prime number greater than 2. Beginning with the computational basis, $\{|0\rangle, ..., |d-1\rangle\}$, the $s$-th state from any of the remaining $d$ MUBs can thus be written as 
\begin{eqnarray}\label{ket}
\vert { \omega}_{s}^{(r)} \rangle=\frac{1}{\sqrt{d}} \sum_{a=0}^{d-1} ({ \omega}^s)^{d-a} ({ \omega}^{-r})^{\alpha_{(a)}} \vert a \rangle
\end{eqnarray}
where ${ \omega}$ is $d$th root of unity, ${ \omega}=\exp ^{2 \pi i /d}$, $s=0,...,d-1$, $r=0,...,d-1$ (the index $r$ { and s} indicating the $r$-th basis { and element of basis respectively}) and $\alpha_{(a)} =a+..+(d-1)$.{ It is instructive to note that $\omega^i\omega^j=\omega^{i\oplus  j}$ with $\oplus$ as addition modulo ${d}$.}\\

\noindent
\begin{definition}
{ Two distinct orthogonal bases composing of unitary transformations, $A^{(0)}=\{A_0^{(0)},...,A_{n-1}^{(0)}\}$ and $A^{(1)}=\{A_{0}^{(0)},...,A_{n-1}^{(1)}\}$,  for some $n$ dimensional subspace of the vector space $M (d,\mathbb{C})$ are MUUB provided that,
\begin{eqnarray}
\vert \Tr (A_i^{(0)\dagger} A_j^{(1)}) \vert ^2=C,~\forall A_i^{(0)} \in A^{(0)}, A_j^{(1)} \in A^{(1)}
\end{eqnarray}
for $i, j=0,...,n-1$ and some constant $C \neq 0$}.
\end{definition}
In what follows, we shall limit ourselves to the $d$ dimensional subspace of $M (d,\mathbb{C})$ with $d$ being a prime number for which its standard basis can be written as
\begin{eqnarray}\label{X}
\{X^0, X^1,...,X^{d-1}\}
\end{eqnarray}
with $X^0=\mathbb{I}_d$, the identity operator and $X^i X^j=X^{i\oplus j}$. Note that $\vert\Tr(X^i X^{j\dagger})\vert =0$ for $i\neq j$. These operators may correspond to the generalised Pauli operators. For simplicity, we shall refer to such a subspace as $M_s$. 

We should mention that, strictly speaking, $\mathcal{H}_d$ (and $M_s$ for the matter) are inner product spaces, i.e. vector spaces endowed with an inner product each. Therefore they are mathematically distinct from their underlying vector spaces (i.e. without defining the inner product on the vector space) or even the underlying sets. However, in the following, we will use $\mathcal{H}_d$ and $M_s$ to refer to anyone of these structures and which is referred to would be clear from context.
\section{The inner product spaces of $\mathcal{H}_d$ and $M_s$.}
 
{ Let us define a map $G:\mathcal{H}_d\rightarrow M_s$} with the rule
\begin{eqnarray}
G(\sum_ia_i|i\rangle)=\sum_ia_i X^i~,~\forall a_i\in \mathbb{C}, i=0,...,d-1.
\end{eqnarray}
{ It can be easily shown that $G$ is an isomorphism mapping any bases for $\mathcal{H}_d$ to one for $M_s$\footnote{This can be understood as a composition of the isomorphism from $\mathcal{H}_d$ to $\mathbb{C}^d$ and from the latter to $M_s$}}. Ensuring that MUBs are mapped to bases for $M_s$ which are `mutually unbiased' to one another, we give the following lemma.
\begin{lemma}\label{L3}
Consider any two states $|\phi \rangle,|\varphi \rangle\in\mathcal{H}_d$ and $G ( \vert \phi \rangle ),G ( \vert \varphi \rangle )\in M_s$. The absolute value of the inner product on $\mathcal{H}_d$, $\vert \langle \phi\vert \varphi \rangle \vert$, is proportional to the absolute value of the inner product on $M_s$, $\vert Tr [ G ( \vert \phi \rangle )^\dagger  G(\vert \varphi\rangle)] \vert$, i.e. $\vert \langle \phi\vert \varphi \rangle \vert=(1/d)(\vert Tr [ G ( \vert \phi \rangle )^\dagger G(\vert \varphi\rangle)] \vert)$.
\end{lemma}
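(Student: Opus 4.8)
The plan is to reduce the statement to a direct computation in coordinates, exploiting that $G$ is linear and that the standard basis $\{X^0,\dots,X^{d-1}\}$ of $M_s$ is orthogonal with respect to the Hilbert--Schmidt inner product. First I would expand the two states in the computational basis, writing $|\phi\rangle=\sum_{i=0}^{d-1}a_i|i\rangle$ and $|\varphi\rangle=\sum_{j=0}^{d-1}b_j|j\rangle$ with $a_i,b_j\in\mathbb{C}$, so that the left-hand side is simply $\langle\phi|\varphi\rangle=\sum_{i=0}^{d-1}\overline{a_i}\,b_i$.

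Next I would apply the definition of $G$ together with its linearity to get $G(|\phi\rangle)=\sum_i a_i X^i$ and $G(|\varphi\rangle)=\sum_j b_j X^j$, and then expand
\begin{equation}
\Tr\!\left[G(|\phi\rangle)^\dagger G(|\varphi\rangle)\right]=\sum_{i,j=0}^{d-1}\overline{a_i}\,b_j\,\Tr\!\left[(X^i)^\dagger X^j\right].
\end{equation}
The key step is then to invoke the orthogonality of the standard basis: since each $X^i$ is unitary, $\Tr[(X^i)^\dagger X^i]=\Tr[\mathbb{I}_d]=d$, while the excerpt already records $\Tr[(X^i)^\dagger X^j]=0$ for $i\neq j$ (equivalently $|\Tr(X^i X^{j\dagger})|=0$). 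Hence $\Tr[(X^i)^\dagger X^j]=d\,\delta_{ij}$, and the double sum collapses to $d\sum_i \overline{a_i}\,b_i = d\,\langle\phi|\varphi\rangle$.

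Taking absolute values on both sides gives $\bigl|\Tr[G(|\phi\rangle)^\dagger G(|\varphi\rangle)]\bigr| = d\,|\langle\phi|\varphi\rangle|$, which is the claimed identity after dividing by $d$. I do not anticipate a genuine obstacle here: the only thing one must be slightly careful about is the bookkeeping of complex conjugates (ensuring the dagger/conjugate-linear-in-the-first-argument conventions match on both sides) and the explicit verification that the $X^i$ are unitary so that the diagonal Hilbert--Schmidt norms are all equal to $d$; everything else is routine bilinear algebra. If desired, one can also phrase the argument more abstractly by noting that $\tfrac{1}{\sqrt d}G$ is an isometry from $\mathcal{H}_d$ (with its inner product) onto $M_s$ (with the Hilbert--Schmidt inner product), from which the lemma is immediate, but the coordinate computation above is the cleanest self-contained route.
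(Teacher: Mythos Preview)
Your proposal is correct and follows essentially the same route as the paper's proof: expand both states in the computational basis, apply $G$, use bilinearity of the trace, and invoke $\Tr[(X^i)^\dagger X^j]=d\,\delta_{ij}$ to collapse the double sum to $d\langle\phi|\varphi\rangle$. If anything, your version is slightly cleaner in that you compute the trace without absolute values first and only take $|\cdot|$ at the end, whereas the paper carries the modulus throughout; your closing remark that $\tfrac{1}{\sqrt d}G$ is an isometry is also a nice conceptual gloss that the paper alludes to only after the lemma.
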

\noindent
\begin{proof}
Let 
$\vert \varphi \rangle=\sum_{i} a_i \vert i \rangle$ and
$\vert \phi \rangle=\sum_{j} b_j \vert j \rangle$. Hence,
we write 
\begin{eqnarray}
G (\vert \varphi \rangle)=G (\sum_{i} a_i \vert i \rangle)=\sum_{i} a_i X^i,\nonumber\\
G (\vert \phi \rangle)=G (\sum_{j} b_j \vert j \rangle)=\sum_{j} b_j X^j.
\end{eqnarray}
Then, 
\begin{align}
\vert \text{Tr} [ G ( \vert \phi \rangle )^\dagger G(\vert \varphi \rangle)] \vert
=&|\text{Tr}[\sum_{j} b_j^\ast X^{j\dagger}\sum_{i} a_i X^i]|\nonumber\\
=&\vert \text{Tr}[\sum_{i,j} a_i b_j^\ast X^i X^{j\dagger}]\vert~.
\end{align}

\noindent
As $\vert \text{Tr}(X^i X^{j\dagger}) \vert=d \delta_{ij} $, we have
\begin{equation}
\vert \text{Tr}[ G ( \vert \phi \rangle )^\dagger G(\vert \varphi \rangle)] \vert= d (\sum_{i,j} a_i b_j^\ast)\\\nonumber
=d\vert \langle \phi \vert \varphi \rangle \vert~.
\end{equation}
%
\qed 
\end{proof}
Admittedly, this lemma may be a long winded way to describe an \textit{isomorphism} between inner product spaces. However, we chose to be explicit mainly for the reason that the above lemma does not give an isometry (which is what is conventionally presented for isomorphism between inner product spaces). Further to that, the mapping between the elements of the underlying set would continue to be useful in the later sections. { An immediate corollary is that, if an operator $U\in M_s$ is unitary, then $|\text{Tr}[ U^\dagger U]|=d$. Thus, $G^{-1}(U)=|U\rangle$ is a normalised vector, i.e. $\vert \langle U \vert U \rangle \vert=1$. The reverse however, is not generally true (it is certainly possible to find operators such that the inner product with itself results in $d$, though not necessarily unitary).}
\begin{lemma}\label{L4}
{ Let $B^{(0)} = \{ \vert \varphi_{0} \rangle,..., \vert \varphi_{d-1} \rangle \}$ and $B^{(1)}=\{ \vert \phi_{0} \rangle,..., \vert \phi_{d-1} \rangle \}$ be two bases for $\mathcal{H}_d$. $B^{(0)}$ and $B^{(1)}$ are MUB to one another if and only if the sets $\{ G(\vert \varphi_{0} \rangle),...,G(\vert \varphi_{d-1} \rangle)\}$ and $\{ G(\vert \phi_{0} \rangle),...,G(\vert \phi_{d-1} \rangle)\}$ are bases for $M_s$ such that $\vert \text{Tr} [ G ( \vert \phi_i \rangle )^\dagger G(\vert \varphi_i \rangle)] \vert^2=d.$}
\end{lemma}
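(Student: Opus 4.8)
The plan is to prove the two directions separately, leveraging Lemma \ref{L3} as the bridge between the inner product on $\mathcal{H}_d$ and the Hilbert--Schmidt inner product on $M_s$. Since $G$ is already established to be an isomorphism of vector spaces carrying any basis of $\mathcal{H}_d$ to a basis of $M_s$, the image sets $\{G(\vert \varphi_i \rangle)\}$ and $\{G(\vert \phi_i \rangle)\}$ are automatically bases for $M_s$; so the real content is the equivalence between the unbiasedness condition $\vert \langle \varphi_i \vert \phi_j \rangle \vert = 1/\sqrt{d}$ for all $i,j$ and the condition $\vert \text{Tr}[G(\vert \phi_i \rangle)^\dagger G(\vert \varphi_i \rangle)]\vert^2 = d$ together with the requirement that the images form bases.

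For the forward direction, I would assume $B^{(0)}$ and $B^{(1)}$ are MUB. Then $\vert \langle \varphi_i \vert \phi_j \rangle \vert = 1/\sqrt{d}$ for every $i,j$, and applying Lemma \ref{L3} with $\vert \varphi \rangle = \vert \varphi_i \rangle$, $\vert \phi \rangle = \vert \phi_i \rangle$ gives $\vert \text{Tr}[G(\vert \phi_i \rangle)^\dagger G(\vert \varphi_i \rangle)]\vert = d \vert \langle \phi_i \vert \varphi_i \rangle \vert = d \cdot (1/\sqrt{d}) = \sqrt{d}$, hence the squared quantity equals $d$, as claimed. That the two image sets are bases follows from $G$ being an isomorphism. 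For the reverse direction, I would assume the image sets are bases of $M_s$ with $\vert \text{Tr}[G(\vert \phi_i \rangle)^\dagger G(\vert \varphi_i \rangle)]\vert^2 = d$; since $G$ is injective and maps bases to bases, $B^{(0)}$ and $B^{(1)}$ are themselves bases of $\mathcal{H}_d$. Running Lemma \ref{L3} in reverse gives $\vert \langle \phi_i \vert \varphi_i \rangle \vert = 1/\sqrt{d}$ for the matched indices.

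The main obstacle — and the point I would want to be careful about — is that the stated Hilbert--Schmidt condition only constrains the \emph{diagonal} overlaps $\langle \phi_i \vert \varphi_i \rangle$ (matched index $i$ on both sides), whereas the MUB definition requires $\vert \langle \varphi_i \vert \phi_j \rangle \vert = 1/\sqrt{d}$ for \emph{all} pairs $i,j$. So the reverse implication as literally stated needs an extra argument. The way to close this gap is the standard orthonormality/normalisation bookkeeping: both $B^{(0)}$ and $B^{(1)}$ are orthonormal bases (being MUB-candidate bases, or one invokes that $G$ preserves orthonormality up to the $1/d$ factor so that orthonormal bases of $\mathcal{H}_d$ correspond exactly to orthogonal bases of $M_s$ of the right norm), and one expands a fixed $\vert \varphi_i \rangle$ in the basis $B^{(1)}$: $1 = \langle \varphi_i \vert \varphi_i \rangle = \sum_j \vert \langle \phi_j \vert \varphi_i \rangle \vert^2$. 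Combined with the constraint that each term in a suitable rearrangement equals $1/d$ (which one gets by applying the hypothesis not just to the given labelling but noting the freedom to relabel, or by the companion computation showing the Gram matrix of cross-overlaps is flat), one forces every $\vert \langle \phi_j \vert \varphi_i \rangle \vert^2 = 1/d$.

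In practice I expect the authors intend the quantified statement "$\vert \text{Tr}[G(\vert \phi_i \rangle)^\dagger G(\vert \varphi_i \rangle)]\vert^2 = d$" to hold for all $i$ after an arbitrary pairing, equivalently $\vert \text{Tr}[G(\vert \phi_j \rangle)^\dagger G(\vert \varphi_i \rangle)]\vert^2 = d$ for all $i,j$; under that reading the lemma is an immediate two-line consequence of Lemma \ref{L3} applied pairwise, with no normalisation argument needed, and I would present it that way, remarking that the forward direction is the substantive one and the reverse is just injectivity of $G$ plus the proportionality of the two inner products.
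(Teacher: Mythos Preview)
Your approach is correct and matches the paper's: both directions are immediate from Lemma~\ref{L3} together with $G$ (and $G^{-1}$) being a vector-space isomorphism. Your observation about the diagonal index is apt---the paper's own proof silently writes $\vert\text{Tr}[G(\vert\phi_i\rangle)^\dagger G(\vert\varphi_j\rangle)]\vert^2=d$ with distinct $i,j$, confirming that the all-pairs reading is intended and that no extra normalisation argument is needed.
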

\begin{proof}
The proof follows quite directly from the previous lemma. Beginning with $B^{(0)}$ and $B^{(1)}$ being MUB to one another, $\vert \langle \phi_i\vert \varphi_j\rangle \vert^2=1/d$, the previous lemma gives
\begin{align}
\vert \langle \phi_i \vert \varphi_j \rangle \vert^2=&\left(\frac{1}{d} \vert\text{Tr} [ G ( \vert \phi_i \rangle )^\dagger G(\vert \varphi_j \rangle)] \vert\right)^2
\nonumber\\
\therefore d =& \vert\text{Tr} [ G ( \vert \phi_i \rangle )^\dagger G(\vert \varphi_j \rangle)] \vert^2.
\end{align}
{ The proof for the reverse can be shown quite easily given that $G^{-1}$ is also an isomorphism.}\qed
\end{proof}
While this tells us of the relationship between the bases of $M_s$ arrived through the map $G$, and thus there would be $d+1$ such bases, it provides no clue as to whether the bases consists of unitary operators or otherwise. In order to determine that, we shall thus first define a particular binary operation on $\mathcal{H}_d$ and then resort to an isomorphism between the monoids  $\mathcal{H}_d$ and $M_s$.
\section{The Monoids $\mathcal{H}_d$ and $M_s$.}

\noindent Let us define a binary operation, $\bullet$, on the set $\mathcal{H}_d$. 
\begin{definition}
\noindent
For all $\vert \varphi \rangle,\vert \phi \rangle \in \mathcal{H}_d $ given as 
\begin{eqnarray}
\vert \varphi \rangle=\sum_{i} a_i \vert i \rangle
~,~\vert \phi \rangle=\sum_{j} b_j \vert j \rangle
\end{eqnarray}
with $a_i,b_j\in\mathbb{C}$ and  $i,j=0,...,d-1$, we let $\bullet:\mathcal{H}_d\times \mathcal{H}_d\rightarrow \mathcal{H}_d$ be defined as
\begin{equation}
\sum_{i} a_i \vert i \rangle \bullet \sum_{j} b_j \vert j \rangle = \sum_{m} \alpha_m \vert  m \rangle
\end{equation}
{ where $\alpha_m=\sum_{i,j} a_i b_j$ and $m=i\oplus j$}.
\end{definition}

We show in Appendix A,  that $\mathcal{H}_d$ with the operation of $\bullet$ defines a monoid. In principle, a monoid can be written as a triple, i.e. $(\mathcal{H}_d,\bullet,e)$ where $e$ is the identity of the monoid. The monoid $(M_s, \cdot,\mathbb{I}_d)$ has the subset $M_s$ of the set $M(d,\mathbb{C})$ as its underlying set with the usual matrix multiplication, $\cdot$, as the binary operation and $\mathbb{I}_d$ the identity. However in what follows, when its clear from context that we are referring to monoids, we shall, for simplicity omit the triple notation and write $\mathcal{H}_d$  and $M_s$ instead. Further to that, we shall also omit the symbol `$\cdot$' when its clear from the context.

\begin{lemma}\label{L6}
The monoid $\mathcal{H}_d$ is isomorphic to $M_s$; i.e., there exists a bijection $G:\mathcal{H}_d\rightarrow M_s$ such that
 $\forall\vert \varphi \rangle ,\vert \varphi \rangle \in \mathcal{H}_d$, 
 \begin{eqnarray}
 G (\vert \varphi \rangle \bullet \vert \phi \rangle)=G(\vert \varphi \rangle) { \cdot} G(\vert \phi \rangle)
 \end{eqnarray}
 \end{lemma}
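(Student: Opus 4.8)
The plan is to verify directly that the map $G$ defined in Section III intertwines the two binary operations, and that, together with the fact already noted in the excerpt that $G$ is a vector-space isomorphism (hence a bijection) sending the basis $\{|i\rangle\}$ to the basis $\{X^i\}$, this gives a monoid isomorphism. So the only genuinely new thing to check is the homomorphism identity $G(|\varphi\rangle\bullet|\phi\rangle)=G(|\varphi\rangle)\cdot G(|\phi\rangle)$, plus the bookkeeping that $G$ sends the monoid identity $e$ to $\mathbb{I}_d$.

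First I would write $|\varphi\rangle=\sum_i a_i|i\rangle$ and $|\phi\rangle=\sum_j b_j|j\rangle$ and compute the left-hand side using the definition of $\bullet$: $|\varphi\rangle\bullet|\phi\rangle=\sum_m\alpha_m|m\rangle$ with $\alpha_m=\sum_{i\oplus j=m}a_i b_j$, so $G(|\varphi\rangle\bullet|\phi\rangle)=\sum_m\alpha_m X^m=\sum_{i,j}a_i b_j X^{i\oplus j}$. Then I would compute the right-hand side: $G(|\varphi\rangle)\cdot G(|\phi\rangle)=\big(\sum_i a_i X^i\big)\big(\sum_j b_j X^j\big)=\sum_{i,j}a_i b_j\,X^i X^j$, and invoke the defining relation $X^i X^j=X^{i\oplus j}$ of the standard basis of $M_s$ recalled in Eq.~\eqref{X}. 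The two expressions coincide term by term, which establishes the homomorphism property. For the identity element: in Appendix A the identity of $(\mathcal{H}_d,\bullet)$ is $e=|0\rangle$ (the $0$-th computational basis vector), and $G(|0\rangle)=X^0=\mathbb{I}_d$, so $G$ carries identity to identity. Since $G$ is already known to be a bijection, this completes the proof that $G$ is a monoid isomorphism.

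There is no real obstacle here — the statement is essentially immediate once one observes that $\bullet$ was \emph{defined} precisely so that $G$ would be multiplicative, the convolution-type product on coefficients mirroring matrix multiplication of the $X^i$. The only points requiring a word of care are: (i) that the sums are genuinely finite and the reindexing $m=i\oplus j$ (addition modulo $d$) is legitimate, which is automatic since all sums run over the finite set $\{0,\dots,d-1\}$; and (ii) consistency with Appendix A, namely that the identity used there is indeed $|0\rangle$ and that $\bullet$ is associative (so that ``monoid'' is justified on both sides). I would therefore phrase the proof as: recall $G$ is a bijection (Lemma~\ref{L3} context / the footnote in Sec. III), then display the two short computations above, then note $G(e)=\mathbb{I}_d$, and conclude.
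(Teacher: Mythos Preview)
Your proof is correct and follows essentially the same approach as the paper: a direct computation showing $G(|\varphi\rangle\bullet|\phi\rangle)=\sum_{i,j}a_ib_jX^{i\oplus j}=G(|\varphi\rangle)\cdot G(|\phi\rangle)$ via the relation $X^iX^j=X^{i\oplus j}$. Your additional explicit check that $G(|0\rangle)=\mathbb{I}_d$ is a nice bit of completeness that the paper leaves implicit.
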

\begin{proof}
Consider
\begin{equation}
\vert \varphi \rangle=\sum_{i} a_i \vert i \rangle~~
\end{equation}
\noindent
and
\begin{eqnarray}
\vert \phi \rangle=\sum_{j} b_j \vert j \rangle~.
\end{eqnarray}
Let $G (\vert \varphi \rangle)=G (\sum_{i} a_i \vert i \rangle)=\sum_{i} a_i X^i$ with $G (\vert \phi \rangle)=G (\sum_{j} b_j \vert j \rangle)=\sum_{i} b_j X^{j}$ and $G(\sum_{i,j} a_i b_j \vert i \oplus j \rangle)=\sum_{i,j} a_i b_j X^{i \oplus j}$, { then}
\begin{align}
G(\vert \varphi \rangle \bullet \vert \phi \rangle)=&G(\sum_{i} a_i \vert i \rangle \bullet \sum_{j} b_j \vert j \rangle)\nonumber\\=&G(\sum_{i,j} a_i b_j \vert i \oplus j \rangle)\nonumber\\
=&\sum_{i,j} a_i b_j X^{i \oplus j} \nonumber\\ =&\sum_{i} a_i X^i.\sum_{i} b_j X^{j} \nonumber\\ =&G (\vert \varphi \rangle). G (\vert \phi \rangle)~.
\end{align}\qed
\end{proof}
{ \noindent We now define a useful concept, ``conjugate transpose" of a \textit{ket} $\vert \varphi \rangle=\sum_{i} a_i \vert i \rangle$. 
\begin{definition}\label{defC}
The conjugate transpose of a state $\vert \varphi \rangle=\sum_{i} a_i \vert i \rangle\in \mathcal{H}_d$ is given as $\vert \varphi \rangle^\dagger=\sum_{i} a^*_i \vert d-i \rangle$.
\end{definition}
This is different from the conventional definition in terms of the dual of a ket, i.e. $|\phi \rangle^\dag = \langle\phi |$. In other words, the standard conjugate transpose of a column vector gives the row vector with conjugated components. Our definition} of such a state is in fact motivated by the conjugate transpose of the operator, say, $M=\sum a_iX^i$. In other words, if $G^{-1}(M)=\sum_{i} a_i \vert i \rangle=\vert \varphi \rangle$, then $G^{-1}(M^\dagger)=\sum_{i} a^*_i \vert d-i \rangle=\vert \varphi \rangle^\dagger$. \\
\begin{lemma}\label{L7}{
$\vert \varphi \rangle \bullet \vert \varphi \rangle^\dagger=\vert 0 \rangle$ if and only if $G(\vert \varphi \rangle ) \cdot G (\vert \varphi \rangle^\dagger)=\mathbb{I}_d$ . }
\end{lemma}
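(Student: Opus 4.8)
The plan is to invoke Lemma~\ref{L6} to transfer the statement between the two monoids and then verify the operator-level identity directly. Concretely, since $G$ is a monoid isomorphism, $G(\vert\varphi\rangle\bullet\vert\varphi\rangle^\dagger)=G(\vert\varphi\rangle)\cdot G(\vert\varphi\rangle^\dagger)$, and since $G$ is a bijection with $G(\vert 0\rangle)=X^0=\mathbb{I}_d$, we have $\vert\varphi\rangle\bullet\vert\varphi\rangle^\dagger=\vert 0\rangle$ if and only if $G(\vert\varphi\rangle)\cdot G(\vert\varphi\rangle^\dagger)=\mathbb{I}_d$. So the ``if and only if'' is almost immediate from the isomorphism, provided one identifies $G(\vert\varphi\rangle^\dagger)$ correctly.

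The one genuine content to check is that $G(\vert\varphi\rangle^\dagger)=G(\vert\varphi\rangle)^\dagger$, i.e. that Definition~\ref{defC} really does correspond to the matrix conjugate transpose; the paragraph preceding the lemma asserts this, so I would either cite it or spell it out. For $\vert\varphi\rangle=\sum_i a_i\vert i\rangle$ we have $G(\vert\varphi\rangle)=\sum_i a_i X^i$, hence $G(\vert\varphi\rangle)^\dagger=\sum_i a_i^* (X^i)^\dagger=\sum_i a_i^* X^{d-i}$, using that each $X^i$ is unitary with $(X^i)^\dagger=(X^i)^{-1}=X^{d-i}$ (indices mod $d$, with $X^0=\mathbb{I}_d$). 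This is exactly $G(\sum_i a_i^*\vert d-i\rangle)=G(\vert\varphi\rangle^\dagger)$. Therefore the statement $G(\vert\varphi\rangle)\cdot G(\vert\varphi\rangle^\dagger)=\mathbb{I}_d$ appearing in the lemma is the same as $G(\vert\varphi\rangle)\cdot G(\vert\varphi\rangle)^\dagger=\mathbb{I}_d$, and the whole lemma reduces to applying the bijection $G$ and its inverse to both sides.

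I would therefore organize the proof in three short steps: (i) record that $X^{i\dagger}=X^{d-i}$ so that $G(\vert\varphi\rangle^\dagger)=G(\vert\varphi\rangle)^\dagger$; (ii) apply Lemma~\ref{L6} to get $G(\vert\varphi\rangle\bullet\vert\varphi\rangle^\dagger)=G(\vert\varphi\rangle)\cdot G(\vert\varphi\rangle^\dagger)$; (iii) use that $G$ is a bijection with $G(\vert 0\rangle)=\mathbb{I}_d$ to conclude both directions. There is no real obstacle here — it is a bookkeeping lemma whose role is to set up the later characterization of which bases of $M_s$ consist of unitary operators. The only place where care is needed is the index arithmetic in (i) (the convention $X^0=X^d=\mathbb{I}_d$ and $d-0\equiv 0$), which I would state explicitly to avoid an off-by-one ambiguity in Definition~\ref{defC}.
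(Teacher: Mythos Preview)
Your proposal is correct and follows essentially the same route as the paper: apply Lemma~\ref{L6} to get $G(\vert\varphi\rangle\bullet\vert\varphi\rangle^\dagger)=G(\vert\varphi\rangle)\cdot G(\vert\varphi\rangle^\dagger)$, then use that $G$ is a bijection with $G(\vert 0\rangle)=\mathbb{I}_d$ for both directions. Your step~(i) verifying $G(\vert\varphi\rangle^\dagger)=G(\vert\varphi\rangle)^\dagger$ is not actually needed for the lemma as stated (the right-hand side already reads $G(\vert\varphi\rangle^\dagger)$, not $G(\vert\varphi\rangle)^\dagger$), and the paper simply cites the preceding paragraph for that identification rather than re-deriving it; but including it does no harm and makes explicit why the conclusion is equivalent to unitarity of $G(\vert\varphi\rangle)$.
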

\begin{proof}
\noindent
This proof follows straightforwardly from { Lemma \ref{L6} on the isomorphism between $\mathcal{H}_d$ and $M_s$. That is,}
\begin{align}
G (\vert \varphi \rangle \bullet \vert \varphi \rangle^\dagger)=&G(\vert 0 \rangle)~,
\nonumber\\
{ \therefore} ~G(\vert \varphi \rangle) { \cdot} G(\vert \varphi \rangle^\dagger)=&\mathbb{I}_d~.
\end{align}
The proof for the reverse is also straightforward and follows in a similar line to the above given the bijective nature of isomorphisms.\qed
\end{proof}
{ This lemma gives us a way to determine if $G$ maps a state in $\mathcal{H}_d$ to an unitary operator in $M_s$. We shall need one more lemma before we proceed to a final theorem. 
\begin{lemma}\label{L8}
The maximal number of MUUBs for a $d$-dimensional subspace of $M(d,\mathbb{C})$ is strictly less than $d+1$.
\end{lemma}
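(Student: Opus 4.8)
\noindent The plan is to turn a hypothetical family of $k$ MUUBs into a family of $k+1$ MUBs for $\mathcal{H}_d$, and then invoke the classical fact that $\mathcal{H}_d$ admits at most $d+1$ MUBs \cite{Durt}. I would first spell out, via Lemma~\ref{L4}, what a family of $k$ MUUBs amounts to on the $\mathcal{H}_d$ side (the case $k\le1$ being trivial). For $k\ge2$ the unitarity of the members forces the unbiasedness constant to be $C=d$ --- expand one basis in another and use that each member has Hilbert-Schmidt norm $\sqrt d$ --- so by Lemma~\ref{L4} the family is the $G$-image of a family of $k$ MUBs of $\mathcal{H}_d$, and the remaining content of being a MUUB is exactly that $G$ send every element of every basis to a unitary operator. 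Hence the key preliminary is to characterise the kets $|\varphi\rangle=\sum_i a_i|i\rangle$ with $G(|\varphi\rangle)$ unitary. Since $X^d=\mathbb{I}_d$ and $\{X^i\}$ is a basis of $M_s$, the operator $X$ has the $d$ distinct $d$-th roots of unity as eigenvalues, so the eigenvalues of $G(|\varphi\rangle)=\sum_i a_i X^i$ are exactly the entries of the discrete Fourier transform of $(a_i)$; thus $G(|\varphi\rangle)$ is unitary iff that transform is componentwise unimodular. Equivalently --- and this is the form I would use --- writing $\mathcal{F}$ for the (unitary) discrete Fourier transform on $\mathcal{H}_d$, $G(|\varphi\rangle)$ is unitary if and only if $\mathcal{F}|\varphi\rangle$ is \emph{flat}: all of its components have modulus $1/\sqrt d$, equivalently $\mathcal{F}|\varphi\rangle$ is unbiased to the computational basis. (One gets the same conclusion from Lemma~\ref{L7}: $G(|\varphi\rangle)$ is unitary iff $|\varphi\rangle\bullet|\varphi\rangle^{\dagger}=|0\rangle$, which unwinds to $(a_i)$ having trivial autocorrelation.)

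\noindent The rest is then short. Given $k$ MUUBs $A^{(1)},\dots,A^{(k)}$, apply $G^{-1}$ to obtain $k$ MUBs $B^{(1)},\dots,B^{(k)}$ of $\mathcal{H}_d$ each of whose elements has unitary $G$-image, and then apply the single unitary $\mathcal{F}$ to all of them. The bases $\mathcal{F}B^{(1)},\dots,\mathcal{F}B^{(k)}$ are again pairwise MUB (any unitary preserves mutual unbiasedness) and, by the characterisation above, consist entirely of flat kets; since every flat ket is unbiased to the computational basis $\{|0\rangle,\dots,|d-1\rangle\}$, which is plainly distinct from each $\mathcal{F}B^{(l)}$, the collection $\{|0\rangle,\dots,|d-1\rangle\},\mathcal{F}B^{(1)},\dots,\mathcal{F}B^{(k)}$ is a family of $k+1$ MUBs for $\mathcal{H}_d$. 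The bound $k+1\le d+1$ then yields $k\le d<d+1$, which is the assertion.

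\noindent I expect the real content to sit in the characterisation in the first paragraph --- the recognition that ``$G$ maps $|\varphi\rangle$ to a unitary'' is, after a Fourier transform, nothing but the condition that $|\varphi\rangle$ be unbiased to the computational basis. Once that is in place, the computational basis enters the MUB family ``for free'' and the counting is immediate. A minor point to settle in passing is that the $d$-dimensional subspace of the statement may be taken to be $M_s$, so that Lemmas~\ref{L4} and~\ref{L7} apply verbatim; only the relations $X^0=\mathbb{I}_d$ and $X^iX^j=X^{i\oplus j}$ are used.
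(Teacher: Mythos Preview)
Your argument is correct and takes a genuinely different route from the paper. The paper proceeds via the Choi isomorphism: it passes from unitaries to maximally entangled states in $\mathcal{H}_d\otimes\mathcal{H}_d$, identifies the relevant $d$-dimensional subspace with $\mathbb{C}^d$, and then bounds the number of MUBs made of MES by a dimension count on the smallest subspace $H_{mes}\subset\text{End}(\mathbb{C}^d)$ containing all MES-type density operators (Appendix~B shows $\dim H_{mes}<d^2$, whence $\dim H_t<d^2-1$, and the orthogonality relation~(\ref{ort}) forces the number of such MUBs to be $<d+1$). Your approach stays entirely inside $\mathcal{H}_d$: diagonalising $X$ shows that $G(|\varphi\rangle)$ is unitary iff $\mathcal{F}|\varphi\rangle$ is flat, so the Fourier transform of any family of $k$ MUUB-preimages is a family of $k$ MUBs each of whose members is unbiased to the computational basis, and adjoining that basis gives $k+1\le d+1$. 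This is more elementary --- no Choi map, no bipartite states, no Appendix~B dimension argument --- and in fact already yields the sharp bound $k\le d$ that the paper only reaches later in Theorem~10. The trade-off is that your argument explicitly uses the cyclic structure $X^iX^j=X^{i\oplus j}$ of $M_s$ (through the simultaneous diagonalisability of the $X^i$), whereas the paper's MES-dimension argument is phrased for a $d$-dimensional subspace spanned by any orthogonal unitaries; given that the paper restricts attention to $M_s$ from Section~II onward, your ``minor point'' is indeed harmless in context.
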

\begin {proof}
Given the Choi isomorphism  for the equivalence between unitary operators and maximally entangled states, the search for MUUBs is equivalent to the search for MUBs consisting of MES.
For a unitary $U_i$, its equivalent MES, $|\mathcal{U}_i\rangle$, can be written as \cite{sac,d1,d2}, 
\begin{eqnarray}\label{udd}
|\mathcal{U}_i\rangle=(\sum_{m}\sum_n \langle n|U_i|m\rangle |m\rangle|n\rangle)/\sqrt{d}~.
\end{eqnarray}
with $|m\rangle$,$|n\rangle$ as some basis vectors for $\mathcal{H}_d$. 
 As we are now considering only the $d$ dimensional subspace of $M(d,\mathbb{C})$, or equivalently, the $d$-dimensional subspace of $\mathcal{H}_d\otimes\mathcal{H}_d$, the inner product between MES coming from differing MUBs (assuming their existence) in this subspace becomes $1/\sqrt{d}$. This corresponds to the inner product of operators from two MUUBs being equal to $d$ (Lemma \ref{L4}) and was noted explicitly for $d=2$ in ref.\cite{Jesni}.

Let us write a basis for this $d$-dimensional subspace of $\mathcal{H}_d\otimes\mathcal{H}_d$ as $\{|\mathcal{U}_i\rangle\}_{i=0}^{d-1}$ for which the subspace spanned is isomorphic to $\mathbb{C}^d$. Writing the basis for the latter as  $\{|i_U\rangle\}_{i=0}^{d-1}$ with $|\mathcal{U}_i\rangle\equiv|i_U\rangle$, the problem essentially reduces to the search for MUBs of a $d$-dimensional space. 
We follow closely the works of \cite{Scott,Hall} where \cite{Scott} gives the maximal number of MUUBs for $M(d,\mathbb{C})$ and \cite{Hall} which gives a simple description how the standard $d+1$ number of MUBs is maximal for a $d$-dimensional Hilbert space.

Considering the set $\{|i_U\rangle\langle j_U|~|~i,j=0,...,d-1\}$ as an orthonormal basis for the $d^2$ space of $\text{End}(\mathbb{C}^d)$, any state (corresponding to either MES or non-MES included\footnote{We should note that the word `corresponding' is used to highlight the simple fact that one cannot determine the nature of entanglement simply based on states in $ \text{End}(\mathbb{C}^d)$}) can thus be written as $\rho_d\in \text{End}(\mathbb{C}^d)$ such that $\text{Tr}(\rho_d)=1$. The dimensionality of this space, $\text{End}(\mathbb{C}^d)$, is obviously $d^2$. However, if we are interested only in the smallest subspace containing states corresponding to MES, call it $H_{mes}$, its dimensionality, $\text{dim}~H_{mes}$ is strictly lesser than $d^2$ (we show this in the Appendix B). Thus if we consider the subspace of traceless Hermitian operators thereof,  $H_{t}$,
\begin{eqnarray}  
H_t=\left\{\rho-\mathbb{I}_d/d~|~\rho~\text{corresponds to MES}\right\}
\end{eqnarray}
the dimensionality, $\text{dim}~H_t< d^2-1$ . For any MUB comprising of $d$ (MES) states, one can construct a set of $d-1$ traceless Hermitian operators \cite{Hall}. It can be shown that for any two states, say $\rho_A,\rho_B\in H_{mes}$ coming respectively from two distinct MUBs, 
\begin{eqnarray}\label{ort}
\text{Tr}[(\rho_A-\mathbb{I}_d/d)(\rho_B-\mathbb{I}_d/d)]=0.
\end{eqnarray}
This implies the subspaces are orthogonal to one another. Hence the maximal number of MUBs consisting of MES is $(\text{dim}~H_t)/(d-1)<d+1$. Going back to $d$-dimensional subspace of $M(d,\mathbb{C})$, the maximal number of MUUB is thus lesser than $d+1$.\\
\qed
\end{proof}

}

\begin{theorem}
The maximal number of MUUBs for the $d$ dimensional subspace $M_s$ is $d$.
\end{theorem}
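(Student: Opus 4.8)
The plan is to establish a matching upper and lower bound. Lemma \ref{L8} already gives the upper bound: the number of MUUBs for $M_s$ is strictly less than $d+1$, hence at most $d$. So the entire burden of the proof is to \emph{construct} $d$ mutually unbiased unitary bases, thereby showing $d$ is attainable. The natural route is to exploit the machinery built up in Sections III and IV: take the $d+1$ MUBs of $\mathcal{H}_d$ given explicitly in Eq.~\eqref{ket} (the computational basis together with the $d$ bases $B^{(r)}=\{|\omega_s^{(r)}\rangle\}_s$), push them through the isomorphism $G$ to obtain $d+1$ bases of $M_s$, and then check which of these image bases actually consist of unitary operators. By Lemma \ref{L4} any two of these image bases are automatically ``mutually unbiased'' in the Hilbert--Schmidt sense, so \emph{whichever} of them are unitary bases will form a collection of MUUBs.

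The key step is therefore the unitarity test. Here Lemma \ref{L7} is the workhorse: $G(|\varphi\rangle)$ is unitary precisely when $|\varphi\rangle \bullet |\varphi\rangle^\dagger = |0\rangle$, where $|\varphi\rangle^\dagger$ is the ``conjugate transpose'' of Definition \ref{defC}. So I would take a generic basis element $|\omega_s^{(r)}\rangle = \frac{1}{\sqrt d}\sum_a (\omega^s)^{d-a}(\omega^{-r})^{\alpha_{(a)}}|a\rangle$, form its conjugate transpose $\frac{1}{\sqrt d}\sum_a (\omega^{-s})^{d-a}(\omega^{r})^{\alpha_{(a)}}|d-a\rangle$, and compute the $\bullet$-product. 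The coefficient of $|m\rangle$ in the product is $\frac{1}{d}\sum_{a \oplus (d-b) = m} (\omega^s)^{d-a}(\omega^{-r})^{\alpha_{(a)}}(\omega^{-s})^{d-b}(\omega^{r})^{\alpha_{(b)}}$, and the task is to show this equals $\delta_{m,0}$. The $s$-dependent factor collapses to a power of $\omega^s$ depending only on $m$ (so it factors out of the sum over the constraint set), and the remaining sum over the $r$-dependent phase $\omega^{r(\alpha_{(b)} - \alpha_{(a)})}$ should be a geometric/Gauss-type sum that vanishes unless $m=0$. Doing this honestly means being careful with the quadratic exponents $\alpha_{(a)} = a + \dots + (d-1) = \tfrac{1}{2}(d-1)d - \tfrac{1}{2}a(a-1)$ modulo $d$ and using that $d$ is an odd prime so that $2$ is invertible; this is the one genuinely computational passage. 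The expectation is that every $B^{(r)}$ for $r=0,\dots,d-1$ maps to a unitary basis (these are essentially the generalized Pauli/Weyl--Heisenberg bases in disguise), while the computational basis does \emph{not} (its image contains $|0\rangle\langle 0|$-type projectors, which are not unitary), giving exactly $d$ MUUBs and no more.

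I would then assemble the argument: (i) cite Lemma \ref{L8} for ``$\le d$''; (ii) invoke Lemma \ref{L4} to note that the $G$-images of distinct MUBs are pairwise Hilbert--Schmidt unbiased bases of $M_s$ with constant $C=d$; (iii) use Lemma \ref{L7} plus the explicit $\bullet$-computation above to verify that the $d$ bases $\{G(|\omega_s^{(r)}\rangle)\}_s$, $r=0,\dots,d-1$, are all unitary bases; (iv) conclude that these $d$ bases are a family of $d$ MUUBs, matching the upper bound, so the maximal number is exactly $d$. It is worth remarking, as the paper signals, that this is \emph{constructive}: the recipe is ``apply $G$ to the standard non-computational MUBs of $\mathcal{H}_d$.''

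The main obstacle will be the phase-sum evaluation in step (iii): showing $\sum$ of the quadratic Gauss-type exponentials enforces $m=0$, and in particular handling the quadratic term $\alpha_{(a)}$ correctly mod $d$ and confirming that no spurious extra constraints on $(r,s)$ appear. A secondary subtlety is making sure the conjugate-transpose convention in Definition \ref{defC} (which sends $|a\rangle \mapsto |d-a\rangle$, i.e. $|0\rangle \mapsto |0\rangle$ since $d \equiv 0$) meshes consistently with the index $d-a$ already appearing in the exponent of Eq.~\eqref{ket}; if it does, the $s$-dependence cancels cleanly and only the $r$-dependent Gauss sum remains, which is exactly the classical fact underlying why these $d$ bases are mutually unbiased in the first place.
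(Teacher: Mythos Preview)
Your overall strategy matches the paper's: upper bound from Lemma~\ref{L8}, then push the explicit MUBs of $\mathcal{H}_d$ through $G$, test unitarity via $|\varphi\rangle\bullet|\varphi\rangle^\dagger=|0\rangle$, and count survivors. The computational passage you outline is also the one the paper carries out, and the sum that arises is in fact a plain geometric sum $\sum_a \omega^{arm}$ (linear in $a$, not a quadratic Gauss sum), which vanishes for $m\neq 0$ exactly when $rm\not\equiv 0\pmod d$.

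However, your prediction of \emph{which} basis fails is inverted, and this stems from a misreading of $G$. The map sends $|i\rangle\mapsto X^i$, so the computational basis is carried to $\{\mathbb{I}_d,X,\dots,X^{d-1}\}$ --- all unitary shift operators, not projectors $|i\rangle\langle i|$. Hence the computational basis \emph{does} yield a MUUB. What fails is the $r=0$ basis among the $d$ ``Fourier-type'' MUBs of Eq.~\eqref{ket}: when $r=0$ the sum $\sum_a\omega^{arm}$ equals $d$ for every $m$, so $|\omega_0^{(0)}\rangle\bullet|\omega_0^{(0)}\rangle^\dagger=\sum_m|m\rangle\neq|0\rangle$ and $G(|\omega_0^{(0)}\rangle)$ is not unitary. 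For $r\neq 0$ the geometric sum vanishes for $m\neq 0$ (here primality of $d$ is used: $rm\not\equiv 0$ whenever $r,m\in\{1,\dots,d-1\}$), and the $s$- and quadratic-in-$m$ phases you worried about sit harmlessly as an overall factor outside the $a$-sum. So the surviving $d$ MUUBs are the computational basis together with the $r=1,\dots,d-1$ bases, not the $r=0,\dots,d-1$ bases. Once you correct this identification, your argument goes through exactly as in the paper.
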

\begin{proof}

\noindent
In the first part of the proof, we will show that while the isomorphism between the vector space $H_d$ and $M_s$ maps between the bases of the vector spaces, at least one of the $d+1$ MUBs would be mapped to a basis of $M_s$ which would contain a operator which is non-unitary.

Consider $\vert { \omega}_{s}^{(r)}\rangle$ given by eq.(\ref{ket}). Then, consider its conjugate transpose according to definition \ref{defC}, $\vert { \omega}_{s}^{(r)} \rangle^\dagger$, as 
{\begin{eqnarray}
\vert { \omega}_{s}^{(r)} \rangle^\dagger=\frac{1}{\sqrt{d}} \sum_{a=0}^{d-1} [({ \omega}^s)^{d-a} ({ \omega}^{-r})^{\alpha_{(a)}}]^\ast \vert d-a \rangle~.
\end{eqnarray}}
Thus to determine if $\vert { \omega}_{s}^{(r)}\rangle$ would be mapped to a unitary operator, we consider the following,
\begin{align}
\vert { \omega}_{s}^{(r)} \rangle \bullet \vert { \omega}_{s}^{(r)}\rangle^\dagger
=& \frac{1}{d} \sum_{m=0}^{d-1}\left [ \sum_{a=0}^{d-1} ({ \omega}^s)^{d-a} (({ \omega}^s)^{b_a^{(m)}})^\ast \cdot ({ \omega}^{-r})^{\alpha_{(a)}} (({ \omega}^{-r})^{\alpha_{(d-b_a^{(m)})}})^\ast\right] \vert m \rangle\nonumber\\
=&\frac{1}{d} \sum_{m=0}^{d-1} \left[ \sum_{a=0}^{d-1} ({ \omega}^{s(d-a-b_a^{(m)})}) \cdot({ \omega}^{\frac{1}{2}r (d-b_a^{(m)}-(d-b_a)^2+(a-1)a)}) \right]  \vert m \rangle
\end{align}
%
%
with the integer $b_a^{(m)}\in[0,d-1]$ such that $a\oplus b_a^{(m)}=m$. We made use of the fact, $\alpha_{(d-b_a^{(m)})}-\alpha_{(a)}= [(d- b_a^{(m)})-(d- b_a^{(m)})^2+(a-1)a]/2$. { As $m=a\oplus b_a^{(m)}$, thus $b_a^{(m)}=m+qd-a$ for some integer $q\ge 0$. As $a$ and $b_a^{(m)}$ are both lesser than $d-1$, thus $q$ cannot be greater than 1.}
With $\omega^d=\omega^{b_a^{(m)}d}=1$ and $\omega^{d/2}=\omega^{d^2/2}=-1$,  we can further simplify the above into
{\begin{eqnarray}\label{ww}
\vert { \omega}_{s}^{(r)} \rangle \bullet \vert { \omega}_{s}^{(r)}\rangle^\dagger
= \frac{1}{d} \sum_{m=0}^{d-1}\left[ \sum_{a=0}^{d-1}{ \omega}^{arm}\cdot { \omega}^{-\frac{r}{2}(m^2+m)-ms } \right] \vert m \rangle.
\end{eqnarray}
Consider the coefficient for the ket $|m\rangle$ is given by 
\begin{eqnarray}\label{cw}
\left[\dfrac{ \omega^{-\frac{r}{2}(m^2+m)-ms }}{d}\right ]\sum_{a=0}^{d-1}{ \omega}^{arm}~,
\end{eqnarray}
in} the case of $r=s=0$, we have, the coefficients for every ket (irrespective of $m$) being 1. This implies 
\begin{eqnarray}
\vert { \omega}_{0}^{(0)} \rangle \bullet \vert { \omega}_{0}^{(0)} \rangle^\dagger=\sum_{m=0}^{d-1}\vert m \rangle\neq |0\rangle~.
\end{eqnarray}
Thus, the element $\vert { \omega}_{0}^{(0)} \rangle$ cannot be mapped to a unitary transformation under $G$, hence at least 1 MUB of $\mathcal{H}_d$ cannot be mapped to a unitary basis for $M_s$.\\
\newline
\noindent The second part of the proof addresses the case for the MUBs with $r\neq 0$; where we will show that only in the case for $m=0$, the coefficient becomes $1$ (irrespective of $r$) and zero otherwise. The former is straightforward and can be seen by setting $m=0$.

{ We can rewrite the index of $\omega$ in the summation of eq.(\ref{cw}), $arm$ simply as $ap$ with $p=rm$. It is obvious that $p$ is an integer}. Writing $l=ap\mod{d}$, for every integer value of $a\in[0,d-1]$, $l$ will take on a unique integer value in $[0,d-1]$ (this is shown in Appendix C). Thus, for $m \neq 0$,
\begin{equation}
{\sum_{a=0}^{d-1} { \omega}^{(ap)}=\sum_{l=0}^{d-1} { \omega}^l =0~.}
\end{equation}
Thus, we have $\vert { \omega}_{s}^{(r)} \rangle \bullet \vert { \omega}_{s}^{(r)} \rangle^\dagger
=|0\rangle$ for $r\neq 0$. Together with Lemma 5, we conclude that only $d$ number of MUBs of $\mathcal{H}_d $ are mapped to MUUBs of $M_s$. 

\qed
\end{proof}

\subsection{Recipe for constructing MUUBs for $M_s$}
\noindent It is now quite obvious that we can construct the maximal number of MUUBs for { the subspace, $M_s$, as described in  the preceeding sections. Writing a basis for the subspace as $X^{(0)}=\{\mathbb{I}_d,X,X^2,...,X^{d-1}\}$, an $s$-th operator of an $r$-th basis, $X_s^{(r)}$, which is MUUB with respect to $X^{(0)}$ is given by}
\begin{eqnarray}
X_s^{(r)}=\dfrac{1}{\sqrt{d}} \sum_{i=0}^{d-1} ({ \omega}^s)^{d-i} ({ \omega}^{-r})^{\alpha_{(i)}} X^i 
\end{eqnarray}
with $r=1,...d-1, s=0,...,d-1,$ and $\alpha_{(i)}=i+...+d-1$. { The bases constructed thus are also pairwise mutually unbiased.}
{ We provide an explicit example for the 3-dimensional subspace of $M(3,\mathbb{C})$. Beginning with say a basis, $\mathcal{X}^{(0)}=\{ \mathbb{I}_3, \mathbb{X}_3, \mathbb{X}_3^2\}$ where $\mathbb{X}_3$ is the generalised Pauli matrix for 3 dimensions (the subscript `3' reflects the 3 dimensional Hilbert space the operator acts on), the other two sets of MUUBs, $\{ \mathcal{X}_0^{(1)}, \mathcal{X}_1^{(1)}, \mathcal{X}_2^{(1)} \}$ and $\{ \mathcal{X}_0^{(2)}, \mathcal{X}_1^{(2)}, \mathcal{X}_2^{(2)} \}$ are described by the operators,}
{ \begin{align}
\mathcal{X}_m^{(1)} &=\frac{1}{\sqrt{3}} [ \mathbb{I}_3 + \omega^{2m} \mathbb{X}_3 +\omega^{m+1} \mathbb{X}_3^2 ]~,
\nonumber\\
\mathcal{X}_n^{(2)} &=\frac{1}{\sqrt{3}} [ \mathbb{I}_3 + \omega^{2n} \mathbb{X} _3+\omega^{n+2} \mathbb{X}_3^2 ]~.
\end{align} \newline
\noindent The above gives a total of 3 MUUBs; we note that this corresponds to the result of the numerical search in ref.\cite{Jesni}.
}
\section{Connection to maximally entangled states}
{ \noindent  The Choi isomorphism between unitary operators on a $d$ dimensional Hilbert space, $\mathcal{H}_d$, and vectors in $\mathcal{H}_d\otimes \mathcal{H}_d$ was used in ref.\cite{Jesni} as a direct way of constructing MUBs consisting of maximally entangled states (then, explicitly for $d=2$). 
\noindent
In constructing MUBs for MES of $\mathcal{H}_d\otimes \mathcal{H}_d$, let us begin with the unitary operators as generalised Pauli matrices, $\mathbb{X}_d$ and $\mathbb{Z}_d$ \cite{Hall} corresponding to MES (as in equation (\ref{udd})) giving the generalised Bell states  as in Ref. \cite{Bennet,Sych},
\begin{eqnarray}
\frac{1}{\sqrt{d}} \vert \mathbb{X}_{d}^{b}\mathbb{Z}_{d}^{a}\rangle=\frac{1}{\sqrt{d}} \sum_{n=0}^{d-1} \omega^{a n} \vert n \rangle \vert n \oplus b \rangle
\end{eqnarray} 
for $a,b=0,...,d-1$. Note the subscript $d$  represents the dimensionality of the Hilbert space the operator acts on. We then choose, only $d$ elements with some fixed values of $a,b$ corresponding to the MES of the form $\vert (\mathbb{X}_{d}^{b}\mathbb{Z}_{d}^{a})^i\rangle/\sqrt{d}$ for $i=0,...,d-1$. This would be a basis for a $d$ dimensional subspace. It is worth noting that this choice of operator, $\mathbb{X}_{d}^{b}\mathbb{Z}_{d}^{a}$, do fulfil our `requirement' of eq. (3). We show this in Appendix D. The $s$-th state for $r$-th basis can then be written as
\begin{eqnarray}
\vert \mathbb{X}_{d}^{b}\mathbb{Z}_{d}^{a}\rangle^{(r)}_s=\dfrac{1}{d}\sum_{i=0}^{d-1} ({ \omega}^s)^{d-i} ({ \omega}^{-r})^{\alpha_{(i)}} \vert (\mathbb{X}_{d}^{b}\mathbb{Z}_{d}^{a})^i\rangle\nonumber\\
\end{eqnarray} 
with $r=1,...,d-1$ and $s=0,...,d-1$.}
\section{Conclusion}

{ 
\noindent
Drawing on the \textit{similarities} between the Hilbert space, $\mathcal{H}_d$, and some subspace, $M_s$, of $M (d,\mathbb{C})$, we have shown that the number of MUUBs that can be constructed for $M_s$, is $d$, i.e. one less than the number of MUBs for $\mathcal{H}_d$ being $d+1$. More specifically, we had considered the isomorphism between the vector spaces, then defined a monoid based on $\mathcal{H}_d$ and subscribe to an isomorphism between it and that of $M_s$ to result in a constructive theorem for the matter.

We also considered the isomorphism between unitary operators and entangled states to provide for a construction of MUBs for MES constrained to some $d$ dimensional subspace of $\mathcal{H}_d\otimes\mathcal{H}_d$. The selection of MES for the construction was based on the unitaries used in accordance with the set in eq.(\ref{X}). In fact, one can easily find examples to demonstrate that the selection of MES for such a construction cannot be arbitrary, i.e. a subspace which contains MUBs must see its bases composed of selected MES with some clear rule.

The work marks a clear milestone in the direction of constructing the maximal number of MUUBs for the more general scenario, i.e. for the entire space of $M(d,\mathbb{C})$ of operators on $\mathcal{H}_d$, or equivalently, MUBs for MES in $\mathcal{H}_d\otimes\mathcal{H}_d$.

}

 \begin{appendices}
\section{The monoid $(\mathcal{H}_d,\bullet,|0\rangle)$}
We show that $(\mathcal{H}_d,\bullet,|0\rangle)$ is a monoid. We begin first by showing that $\bullet$ is closed on $\mathcal{H}_d$. \\
\newline
\noindent $ \vert \varphi \rangle \bullet \vert \phi \rangle=\sum_{i} a_i \vert i \rangle \bullet \sum_{j} b_j \vert j \rangle=\sum_{m} \alpha_m \vert m \rangle $. Note that $\alpha_m \in \mathbb{C}~,  \forall ~m$ , therefore $\sum_{m} \alpha_m \vert m \rangle \in \mathcal{H}_d$, hence the set is closed under the operation $\bullet$.\\
\newline
We next show that the operation is associative on $\mathcal{H}_d$.
%
%
\begin{eqnarray}
(\vert \varphi \rangle \bullet \vert \phi \rangle) \bullet \vert \Phi \rangle=\sum_{m} \alpha_m \vert m \rangle \bullet \sum_{k} c_k \vert k \rangle
\end{eqnarray}
where $\alpha_m=\sum_{i,j} a_i b_j ~and~ i \oplus j =m$. Then,

\begin{eqnarray}
\sum_{m} \alpha_m \vert m \rangle \bullet \sum_{k} c_k \vert k \rangle=\sum_{n} \beta_n \vert n \rangle
\end{eqnarray}
where $\beta_n=\sum_{m,k} \alpha_m c_k	~ and~ m \oplus k =n$ or $\beta_n=\sum_{i,j,k}  a_i b_j c_k$ with $i \oplus j \oplus k =n$.

\begin{align}
\sum_{n} \beta_n \vert n \rangle=&\sum_{i, j, k} a_i b_j c_k \vert i \oplus j \oplus k \rangle\nonumber\\
=&\sum_{i,j,k} ( a_i)( b_j c_k) \vert i \oplus j \oplus k \rangle \nonumber\\=&(\sum_{i} a_i \vert i \rangle) \bullet (\sum_{j,k} b_j c_k \vert j \oplus k \rangle)
\nonumber\\
=&\vert \varphi \rangle \bullet (\vert \phi \rangle \bullet \vert \Phi \rangle)
\end{align}
where $\vert \phi \rangle \bullet \vert \Phi \rangle=\sum_{j,k} b_j c_k \vert j \oplus k \rangle$.
Hence, $\bullet$ is associative.

To complete the requirement of being a monoid, we show the existence of an identity in $\mathcal{H}_d$ with respect to the operation $\bullet$.
Let $\vert e \rangle=\sum_{i} a_i \vert i \rangle$ and $\sum_{i,j} a_i b_j  \vert i \oplus j \rangle$ and $ i \oplus j=m$. Then, $\vert e \rangle$ is an identity if
\[
  a_i =
  \begin{cases}
                                   1 & \text{if $i=0$} \\
                                   0 & \text{if $i \neq 0$} ~.
  
  \end{cases}
\]
This effectively means $|e\rangle=|0\rangle$. We write it in the above form to ensure that the operation with any state in $\mathcal{H}_d$ follows from the definition for $\bullet$.
\noindent
Thus
$\vert e \rangle \bullet \sum b_j \vert j \rangle=\sum_{i} a_i \vert i \rangle \bullet \sum b_j \vert j \rangle = \vert 0 \rangle \bullet \sum b_j \vert j \rangle =\sum b_j  \vert 0 \oplus j \rangle=\sum b_m \vert m \rangle$. On the other hand, 
$\sum b_j \vert j \rangle \bullet \vert e \rangle=\sum b_j \vert j \rangle \bullet \sum_{i} a_i \vert i \rangle =\sum b_j \vert j \rangle \bullet \vert 0 \rangle=\sum b_j  \vert j \oplus 0 \rangle= \sum b_m \vert m \rangle$ . Hence $\vert e \rangle=\vert 0 \rangle$ is an identity element of $\mathcal{H}_d$.

In the following, we show that the operation is also distributive on $\mathcal{H}_d$ { with respect to vector sum. Consider}
\begin{align}
\vert \varphi \rangle \bullet (\vert \phi \rangle + \vert \Phi \rangle)=&\sum_{i} a_i \vert i \rangle \bullet (\sum_{j} b_j \vert j \rangle \nonumber\\+& \sum_{j} c_j \vert j \rangle)=\sum_{s} \omega_s \vert s \rangle
\end{align}
where $\omega_s=\sum_{i,j} a_i (b_j + c_j)$ and $i \oplus j = s$.

\begin{align}
\sum_{s} \omega_s \vert s \rangle=&\sum_{i,j} a_i (b_j + c_j) \vert i \oplus j \rangle\nonumber \\=&\sum_{i,j} a_i b_j \vert i \oplus j \rangle + \sum_{i,j} a_i c_j  \vert i \oplus j \rangle\nonumber\\
=&(\sum_{i} a_i \vert i \rangle \bullet \sum_{j} b_j \vert b_j \rangle )\nonumber\\
+&( \sum_{i} a_i \vert i \rangle \bullet \sum_{j} c_j \vert j \rangle)\nonumber\\
=&(\vert \varphi \rangle \bullet \vert \phi \rangle) +( \vert \varphi \rangle \bullet \vert \Phi \rangle)~.
\end{align}
Thus, $\bullet$ is distributive.
\section{}
As noted from the text, the subspace spanned by $\{|\mathcal{U}_0\rangle,...,|\mathcal{U}_{d-1}\rangle \}$ is isomorphic to $\mathbb{C}^d$ and a basis for the latter can be written as $\{|i_U\rangle\}_{i=0}^{d-1}$. Hence, $\{|i_U\rangle\langle j_U|~|~i,j=0,...,d-1\}$ is an orthonormal basis for the $d^2$ space of $\text{End}(\mathbb{C}^d)$. In order to ascertain if any of these states, $\rho\in\text{End}(\mathbb{C}^d)$ is (actually) a MES, we need to reconsider the subsystems, of the state. In this case, the subsystems are two $d$-dimensional states and we refer to the subsystems later as systems 1 and 2. 
Let us thus consider the entangled state from the Choi isomorphism from the unitary $U_i$ given as $|\mathcal{U}_i\rangle$, 
\begin{eqnarray}
|\mathcal{U}_i\rangle=(\sum_{m}\sum_n \langle n|U_i|m\rangle |m\rangle|n\rangle)/\sqrt{d}~.
\end{eqnarray}
 The set $\{|\mathcal{U}_i\rangle\langle \mathcal{U}_j|~|~i,j=0,...,d-1\}$ is an orthonormal basis for a $d^2$ subspace of $\text{End}(\mathbb{C}^d\otimes\mathbb{C}^d)$. Let us call this subspace $M_{d^2}$.\\
 \newline
We show that a function $F:\text{End}\left(\mathbb{C}^d\right)\rightarrow M_{d^2}$ with the rule
\begin{eqnarray}
F\left(\sum_{i,j}a_{i,j}|i_U\rangle\langle j_U|\right)=\sum_{i,j}a_{i,j}|\mathcal{U}_i\rangle\langle \mathcal{U}_j|
\end{eqnarray}
is an isomorphism.
Note that $F$ is injective,
\begin{align}
F\left(\sum_{i,j}a_{i,j}|i_U\rangle\langle j_U|\right)=F\left(\sum_{i,j}b_{i,j}|i_U\rangle\langle j_U|\right)\nonumber\\
\Rightarrow \sum_{i,j}a_{i,j}|\mathcal{U}_i\rangle\langle \mathcal{U}_j|=\sum_{i,j}b_{i,j}|\mathcal{U}_i\rangle\langle \mathcal{U}_j|\nonumber\\
\therefore \sum_{i,j}a_{i,j}|\mathcal{U}_i\rangle\langle \mathcal{U}_j|-\sum_{i,j}b_{i,j}|\mathcal{U}_i\rangle\langle \mathcal{U}_j|=0\nonumber\\
 \therefore\forall i,j,~a_{i,j}=b_{i,j}\Rightarrow \sum_{i,j}a_{i,j}|i_U\rangle\langle j_U|=\sum_{i,j}b_{i,j}|i_U\rangle\langle j_U|~.
\end{align}
The surjectiveness of $F$ is obvious (any element in $M_{d^2}$ can be written as $\sum_{i,j}k_{i,j}|\mathcal{U}_i\rangle\langle \mathcal{U}_j|=F\left(\sum_{i,j}k_{i,j}|i_U\rangle\langle j_U|\right)$. To show the linearity of $F$;
\begin{align}
F\left(\alpha\sum_{i,j}a_{i,j}|i_U\rangle\langle j_U|+\sum_{i,j}b_{i,j}|i_U\rangle\langle j_U|\right)\nonumber\\
=F\left(\sum_{i,j}(\alpha a_{i,j}+b_{i,j})|i_U\rangle\langle j_U|\right)=\sum_{i,j}(\alpha a_{i,j}+b_{i,j})|\mathcal{U}_i\rangle\langle \mathcal{U}_j|\nonumber\\
=\sum_{i,j}\alpha a_{i,j}|\mathcal{U}_i\rangle\langle \mathcal{U}_j|+b_{i,j}|\mathcal{U}_i\rangle\langle \mathcal{U}_j|\nonumber\\
=\alpha F\left(\sum_{i,j}a_{i,j}|i_U\rangle\langle j_U|\right)+F\left(\sum_{i,j}b_{i,j}|i_U\rangle\langle j_U|\right)~.
\end{align}
Let $\rho\in\text{End}(\mathbb{C}^d)$ and $\rho_M=F(\rho)\in\text{End}(\mathbb{C}^d\otimes\mathbb{C}^d)$. It is easy to see that to ascertain whether $\rho$ correspond to some MES, we consider if $\text{Tr}_1(\rho_M)=\text{Tr}_2(\rho_M)=\mathbb{I}_d/d$ where the subscripts $1$ and $2$ refer to the subsystems the partial trace is taken over. Let us refer to the space $H_{M}$ as the smallest subspace of Hermitian operators $M_{d^2}$ which contains all MES in $M_{d^2}$. Following \cite{Scott}, we can say that all MES in $M_{d^2}$ is a convex set of the subspace
\begin{eqnarray}
H_M=\{J\in\text{End}(\mathbb{C}^d\otimes\mathbb{C}^d)~|~\text{Tr}_1(J)=\text{Tr}_2(J)=\text{Tr}(J)\mathbb{I}_d/d\}~.
\end{eqnarray}
Note that every element in $M_{d^2}$ can be written as a $\sum r_{ik}\lambda_i\otimes\lambda_k$ where $\{\lambda_i\}_{i=0}^{d^2-1}$ is an orthonormal Hermitian operator basis for $\text{End}(\mathbb{C}^d)$ with $\lambda_0=\mathbb{I}/\sqrt{d}$ and $r_{ik}\in\mathbb{R}$. The subspace $H_M$ however requires $r_{k0}=r_{0k}=0$ which implies that every element in $H_M$ is also an element in $M_{d^2}$, but not vice versa. Thus $H_M$ is a proper subset of $M_{d^2}$; and $\text{dim}~H_M<d^2$.

Finally, given the isomorphism $F$, we can identify the smallest subspace of all operators in $\text{End}(\mathbb{C}^d)$ containing operators corresponding to MES, call it $H_{mes}$, with dimensionality equal to $\text{dim}~H_M$.

\section{}
{{ Let $d$ be some prime number, $a_1,a_2=0,1,...,d-1$ and $p$ be integers. Let $l=a_1p\mod{d}=a_2p\mod{d}$. There exists integers $n_1,n_2$ such that,}
\begin{align}
a_1 p-l=&n_1 d\nonumber\\
a_2 p-l=&n_2 d~.
\end{align}
With $p\neq 0$, we have
\begin{eqnarray}\label{a1a2}
(a_1-a_2)=\left(\dfrac{n_1-n_2}{p}\right )d~.
\end{eqnarray}
\noindent We shall show that the eq.(\ref{a1a2}) is only true when $a_1=a_2$.
{ Let us first consider the case $p > |n_1-n_2| $, implying that $(n_1-n_2)/p$ is a fraction. As $d$ is prime greater than $2$, therefore $a_1- a_2$ is a fraction. However, as $a_1, a_2 \in [0, d-1]$, therefore $a_1-a_2$ cannot be a fraction. Thus, $p$ cannot be greater than $|n_1-n_2|$. 

Now, If $p \leq |n_1-n_2|,$ then this implies that $|a_1-a_2| \geq d$. However, note that as $|a_1-a_2|< d-1$, hence $p$ is neither less than nor equal to $|n_1-n_2|$. Thus eq.(\ref{a1a2}) is true if $a_1-a_2=n_1-n_2=0$. In turn, if $a_1\neq a_2$, therefore we have $a_1p\mod{d}\neq a_2p\mod{d}$.
Further to that, $l=ap\mod{d}$ is defined for all values of $a=0,...,d-1$, thus $l=0,...,d-1$.}}
\section{}
 In showing that $(\mathbb{X}_d^b\mathbb{Z}_d^a)^i(\mathbb{X}_d^b\mathbb{Z}_d^a)^j=(\mathbb{X}_d^bZ_d^a)^{i\oplus j}$, we first need to prove the following identity,
\begin{eqnarray}
(\mathbb{X}_d^b\mathbb{Z}_d^a)^n=\omega^{ab(n^2-n)/2}\mathbb{X}_d^{bn}\mathbb{Z}_d^{an}~.
\end{eqnarray}

\noindent The proof is by induction. Its obvious that for $n=1$
\begin{eqnarray}
(\mathbb{X}_d^b\mathbb{Z}_d^a)^1=\omega^{ab(1^2-1)/2}\mathbb{X}_d^{b}\mathbb{Z}_d^{a}=\mathbb{X}_d^b\mathbb{Z}_d^a~.
\end{eqnarray}
Now assume this is true for $n=k$
\begin{eqnarray}
(\mathbb{X}_d^b\mathbb{Z}_d^a)^k=\omega^{ab(k^2-k)/2}\mathbb{X}_d^{bk}\mathbb{Z}_d^{ak}~,
\end{eqnarray}
we need to determine for $n=k+1$,
\begin{eqnarray}
(\mathbb{X}_d^b\mathbb{Z}_d^a)^{k+1}=\omega^{ab(k^2-k)/2}\mathbb{X}_d^{bk}\mathbb{Z}_d^{ak}\mathbb{X}^b\mathbb{Z}_d^a~.
\end{eqnarray}
As we swap the positions of $\mathbb{X}_d^b$ and $\mathbb{Z}_d^ak$, we will make a total number of $abk$ swaps and thus introduce the term $\omega^{abk}$. Hence
\begin{align}
(\mathbb{X}_d^b\mathbb{Z}_d^a)^{k+1}&=\omega^{ab(k^2-k)/2}\omega^{abk}\mathbb{X}_d^{b(k+1)}\mathbb{Z}_d^{a(k+1)}\nonumber\\
&=\omega^{ab(k^2-k+2k)/2}\mathbb{X}_d^{b(k+1)}\mathbb{Z}_d^{a(k+1)}\nonumber\\
&=\omega^{ab[(k+1)^2-(k+1)]/2}\mathbb{X}_d^{b(k+1)}\mathbb{Z}_d^{a(k+1)}~.
\end{align}
Now, if $n=d>2$ (the case for $d=2$ is trivial), we have\\
\newline
\begin{eqnarray}
(\mathbb{X}_d^b\mathbb{Z}_d^a)^d=\omega^{ab(d^2-d)/2}\mathbb{X}_d^{bd}\mathbb{Z}_d^{ad}~.
\end{eqnarray}
As $ab(d^2-d)/2=abd(d-1)/2$, and $d-1$ is an even number (because $d$ is prime), hence $abd(d-1)/2$ is some positive multiple, $M$, of $d$ or $abd(d-1)/2=Md$). Note that $\omega^{Md}=(\omega^d)^{M}=1$. Thus
\begin{eqnarray}
(\mathbb{X}_d^b\mathbb{Z}_d^a)^d=\mathbb{X}_d^{bd}\mathbb{Z}_d^{ad}=(\mathbb{X}_d^{d})^b(\mathbb{Z}_d^{d})^a=\mathbb{I}
\end{eqnarray}
as $\mathbb{X}_d^b=\mathbb{Z}_d^a=\mathbb{I}.$ Noting that, $i+j=Nd+(i\oplus j)$, for some positive integer $N$, $(\mathbb{X}_d^b\mathbb{Z}_d^a)^i(\mathbb{X}_d^b\mathbb{Z}_d^a)^j=(\mathbb{X}_d^bZ_d^a)^{i\oplus j}$ follows straightforwardly.

\end{appendices}

\end{document}